\newtheorem{theorem}{Theorem}[section]
\newtheorem{problem}{Problem}[section]
\theoremstyle{definition}
 \gdef\xxxmark{%
   \expandafter\ifx\csname @mpargs\endcsname\relax 
     \expandafter\ifx\csname @captype\endcsname\relax 
       \marginpar{xxx}
     \else
       xxx 
     \fi
   \else
     xxx 
   \fi}
 \gdef\xxx{\@ifnextchar[\xxx@lab\xxx@nolab}
 \long\gdef\xxx@lab[#1]#2{\textbf{[\xxxmark #2 ---{\sc #1}]}}
 \long\gdef\xxx@nolab#1{\textbf{[\xxxmark #1]}}
\def\andlinebreak{\end{tabular}\linebreak\begin{tabular}[t]{c}}
\newcommand{\NP}{\textrm{NP}}
\newcommand{\PSPACE}{\textrm{PSPACE}}
\newcommand{\PROBLEM}[1]{#1}
\let\epsilon=\varepsilon
\title{Losing at Checkers is Hard}
\author{
Jeffrey Bosboom\thanks{MIT Computer Science and Artificial Intelligence
  Laboratory, 32 Vassar Street, Cambridge, MA 02139, USA,
  \protect\url{{jbosboom,edemaine,mdemaine,jaysonl}@mit.edu}}
\and
Spencer Congero\thanks{Department of Electrical and Computer Engineering,
  University of California, San Diego, \protect\url{scongero@ucsd.edu}}
\and
Erik D.\ Demaine\footnotemark[1]
\andlinebreak
Martin L. Demaine\footnotemark[1]
\and
Jayson Lynch\footnotemark[1]
}
\date{}
\begin{document}
\maketitle

\begin{abstract}
We prove computational intractability of variants of checkers:
(1) deciding whether there is a move that forces the other player to win in one move is NP-complete;
(2) checkers where players must always be able to jump on their turn is PSPACE-complete; and
(3) cooperative versions of (1) and (2) are NP-complete.
We also give cooperative checkers puzzles whose solutions are the letters of the alphabet.
\end{abstract}

\section{Introduction}


Winning isn't for everyone.  Sometimes, you want to let your little sibling /
nibling / child win, so that they can feel the satisfaction, encouraging
them to play more games.  How do you play to lose?  This idea is formalized by
the notion of the \emph{mis\`ere} form of a game, where the rules are identical
except that the goal is to lose (have the other player win) according to the
normal rules \cite{Berlekamp-Conway-Guy-2001,Conway-1976-ONAG}.
Although your little relative may \emph{want} to win, it's probably
unreasonable to assume that they play optimally, so in the mis\`ere game,
we not only aim to lose but also to prevent the opponent from losing.

Checkers (in American English, ``draughts'' in British English) is a classic
board game that many people today grow up with.  Although there are several
variations, the main ruleset played today is known as American checkers and
English draughts, and has had a World Championship since the 1840s
\cite{Wikipedia}.  Computationally, the game is \emph{weakly solved}, meaning
that its outcome from the initial board configuration (only) has been computed
\cite{Schaeffer1518}: if both players play optimally to win, the outcome will
be a draw.

But what if both players play optimally to lose?  (Again, your little relative
may not \emph{aim} to lose, but you want to lose no matter what they do, so
this ``worst-case'' mathematical model more accurately represents the goal.)
The mis\`ere form of checkers is known as \emph{suicide checkers},
\emph{anti-checkers}, \emph{giveaway checkers}, and \emph{losing draughts}
\cite{Wikipedia}.  This game is less well studied.  No one knows whether the
outcome is again a draw or one of the players successfully losing.  There are
a few computer players such as Suicidal Cake \cite{SuicidalCake} that
occasionally compete against each other.

\paragraph{Complexity of winning.}
In this paper, we consider a few checkers variants, in particular suicidal
checkers, from a \emph{computational complexity} perspective.
Computational complexity allows us to analyze the best computer algorithm to
optimally play a game, and in particular, how the time and memory required by
that algorithm must \emph{grow} with the game size.  For this analysis to make
sense, we need to generalize the game beyond any ``constant'' size such as
12 pieces on an $8 \times 8$ board.  The standard generalization for checkers
is to consider an $n \times n$ board, and an arbitrary position of pieces
on that board (imagining that we reached this board configuration by several
moves from a hypothetical initial position).  The computational problem is to
decide who wins from this configuration assuming optimal play.

In this natural generalization, Robson \cite{robson1984n} proved that normal
checkers is \emph{EXPTIME-complete}, meaning in particular that any correct
algorithm must use time that grows exponentially fast as $n$ increases ---
roughly needing $c^n$ time for some constant $c > 1$.  EXPTIME-completeness is
a precise notion we will not formalize here, but it also implies that checkers
is one of the hardest problems that require exponential time, in the sense
that all problems solvable in exponential time (which is most problems we
typically care about) can be converted into an equivalent game of checkers.
So maybe all that time spent playing checkers as a kid wasn't a waste\dots\
(There are many such complexity results about many different games; see, e.g.,
\cite{GPC,6.890}.)

One of the checkers variants we study in this paper adds one rule: every move
must capture at least one piece.  Thus, a player loses if they cannot make a
jumping move.  This \emph{always-jumping} checkers is not very interesting
from the usual starting configuration (where no jumps are possible),
but it is interesting from an arbitrary board configuration.
In Section~\ref{always-jumping}, we show that always-jumping checkers is
\emph{PSPACE-complete} --- a complexity class somewhere in between ``easy''
problems (P) and EXPTIME-complete problems.  Formally, always-jumping
can be solved using polynomial space (memory) --- something we don't know how
to do for normal checkers, where many moves can be made between captures ---
and always-jumping checkers is among the hardest such problems.
This result suggests that always-jumping checkers is somewhat easier than
normal checkers, but not by much.

\paragraph{Complexity of losing.}
But what is the computational complexity of suicide checkers?
This question remains unsolved.

In this paper, we study a closely related problem: what is the computational
complexity of deciding whether you can lose the game ``in one move'', that is,
make a move that forces your opponent to win in a single move?
We call this the \emph{lose-in-1} variation of a game.
In Section~\ref{lose1}, we prove that lose-in-1 checkers is \emph{NP-complete}
--- a complexity class somewhere in between ``easy''
problems (P) and PSPACE-complete problems.  In particular, assuming
a famous conjecture (the Exponential Time Hypothesis),
NP-complete problems (and thus also PSPACE-complete problems)
require nearly exponential time ---
roughly $2^{n^\epsilon}$ for some $\epsilon > 0$.

\label{mate-in-1}
Hardness of losing checkers in one move is particularly surprising because
\emph{winning} checkers in one move (\emph{mate-in-1}) is known to be easy.
An early paper about the complexity of normal checkers
\cite{fraenkel1978complexity} gave an efficient algorithm for deciding
mate-in-1 checkers, running in time linear in the number of remaining pieces.
Namely, for each of your pieces, draw a graph whose vertices represent
reachable positions by jumping, and whose edges represent (eventually)
jumpable pieces.
Then, assuming all opponent pieces appear as edges in this graph,
winning the game is equivalent to finding an \emph{Eulerian path} in the graph
(a path that visits each edge exactly once)
that starts at the piece's initial position.
Finding Eulerian paths goes back to Euler in 1736, though the problem was not
fully solved until 1873 by Hierholzer \cite{Hierholzer-1873}
who also gave a linear-time algorithm.
Proving that lose-in-1 checkers is an NP problem instead of, say, PSPACE or
EXPTIME involves extending this mate-in-1 result.

\paragraph{Complexity of cooperating.}
The final checkers variants we consider in this paper are \emph{cooperative}
versions, where the two players cooperate to achieve a common goal,
effectively acting as a single player.  The same proofs described above also
show NP-completeness of \emph{cooperative win-in-2} checkers --- where the
players together try to eliminate all pieces of one color in two moves --- and
\emph{cooperative always-jumping} checkers --- where the players together try
to eliminate all pieces of one color using only jumping moves.
In Section~\ref{coop}, we describe these results in more detail.

\paragraph{Checkers font.}
Finally, in Section~\ref{font}, we give a checkers \emph{puzzle font}:
a series of cooperative always-jumping checkers puzzles
whose solutions trace out all 26 letters (and 10 numerals) of the alphabet.
See Figures~\ref{fig:PuzzleFont} and~\ref{fig:SolvedFont}.
We hope that this font will encourage readers
to engage with the checkers variants studied in this paper.
The font is also available as a free web app.%
\footnote{\url{http://erikdemaine.org/fonts/checkers/}}

%
%
%

\section{Checkers Rules and Variants}

First let us review the rules of American checkers / English draughts.
Two players, Black and Red, each begin with 12 pieces placed in an arrangement on the dark squares of an $8 \times 8$ checkerboard.  Black moves first, and play takes place on the dark squares only.  Initially, pieces are allowed to move diagonally forward (i.e., toward the opponent's side) to other diagonally adjacent dark squares, but not backward.  If a player's piece, an opponent's piece, and an empty square lie in a line of diagonally connected dark squares, the player must ``jump'' over the opponent's piece and land in the empty square, thereby capturing the opponent's piece and permanently removing it from the board.  Furthermore, jumps must be concatenated if available, and a player's turn does not end until all possible connected jumps have been exhausted.  If a piece eventually reaches the last row on the opponent's side of the board, it becomes a ``king,'' which enables movement to any diagonally adjacent dark square, regardless of direction.  The goal of each player is to capture all of the opponent's pieces.

As in \cite{robson1984n,fraenkel1978complexity}, we generalize the game to an
arbitrary configuration of an $n \times n$ board.
We still assume that the next player to move is Black.

Next we describe the two main variants of checkers considered in this paper.
First, in \emph{lose-in-1} checkers, one player tries to make a single move
such that, on the following turn, the opponent is forced to capture all of the
first player's remaining pieces with a sequence of jumps.
Second, in \emph{always-jumping} checkers,
we add one rule: a piece must jump an opponent's piece on every move.
In other words, pieces are now forbidden from moving to another square without
a jump; the first player with no available jumps (or no pieces) loses.
As in standard checkers, jumps must be concatenated within a single turn as
long as there is another jump directly available.

In Section~\ref{coop}, we define and analyze cooperative versions of these two
checkers variants, where players work collaboratively to make one of them to
win (without particular regard to \emph{who} wins).  These variants arise
if the players hate draws, or if they want to make the game end quickly.

To improve readability in the following proofs and figures, we rotate the
checkerboard $45^\circ$ so that pieces move horizontally and vertically,
rather than diagonally.

\section{Lose-in-1 Checkers is NP-Complete} \label{lose1}

In this section, we prove the following computational decision problem
NP-complete:

\newcommand{\clio}{\PROBLEM{Lose-in-1 Checkers}}

\begin{problem}[\clio]
  Given a checkers game configuration, does the current player have a move such that the other player must win on their next move?
\end{problem}

In general, proving NP-completeness consists of two steps \cite{6.890}:
(1)~proving ``NP-hardness'' (Theorem~\ref{clio NP-hard} below) and
(2)~proving ``membership in NP'' (Theorem~\ref{in NP} below).
The first step establishes a \emph{lower bound} on complexity --- technically,
that the problem is as hard as all problems in the complexity class in NP.
The second step establishes an \emph{upper bound} on complexity --- that
the problem is no harder than NP.

Without getting into the definition of NP, we can prove NP-hardness of a
problem by showing that it is at least as hard as a known NP-hard problem.
We prove such a relation by giving a \emph{reduction} that shows how to
(efficiently) convert the known NP-hard problem into the target problem.
In our case, we will reduce from the following problem;
refer to Figure~\ref{fig:example-graph} for an example.


\newcommand{\feh}{\PROBLEM{Planar Forced Edge Hamiltonicity}}

\begin{problem}[\feh]
  Given a max-degree-3 planar undirected graph%
  \footnote{A graph is \emph{max-degree-3} if every vertex has at most
    three incident edges.
    A graph is \emph{planar} if it can be drawn in the plane,
    with vertices as points and edges as curves (or straight lines),
    such that the edges intersect only at common endpoints.}
  and a perfect matching in that graph,%
  \footnote{A \emph{perfect matching} in a graph is a set of edges (``matched
    edges'') such that every vertex is the endpoint of exactly one matched edge.}
  is there a cycle that visits every edge in the matching?
  
  The graph and matching must also satisfy the promise that we can draw a
  cycle that crosses every matched edge (possibly multiple times) without
  crossing any nonmatched edges.  (An equivalent promise is that the nonmatched
  cycles do not enclose any vertices of the graph,
  i.e, form faces of the graph.)
\end{problem}

\begin{figure}
  \centering
  \subcaptionbox{A graph with a matching (drawn in red).}{\includegraphics[width=.3\textwidth]{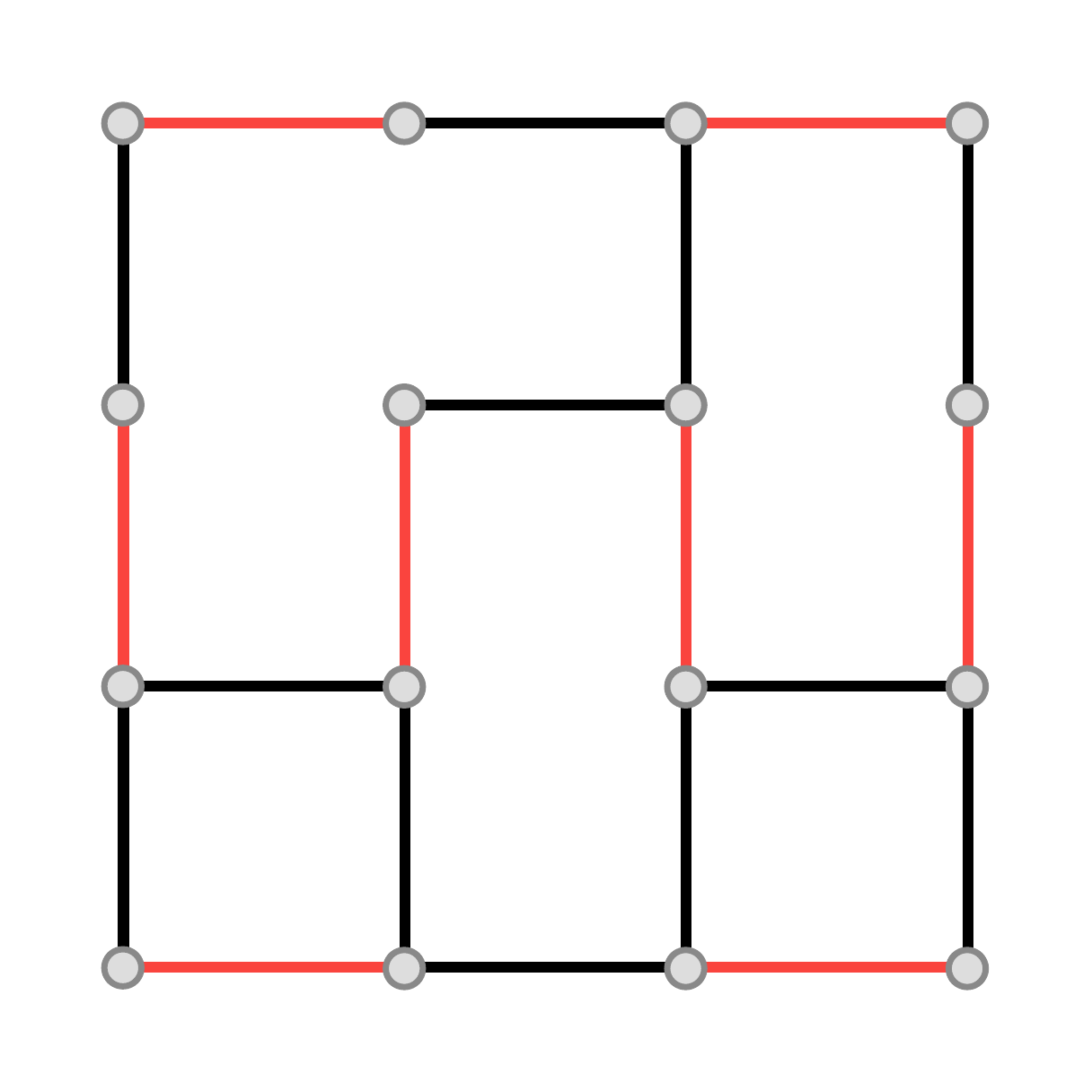}}\hfil\hfil
  \subcaptionbox{A cycle crossing all and only matched edges.}{\includegraphics[width=.3\textwidth]{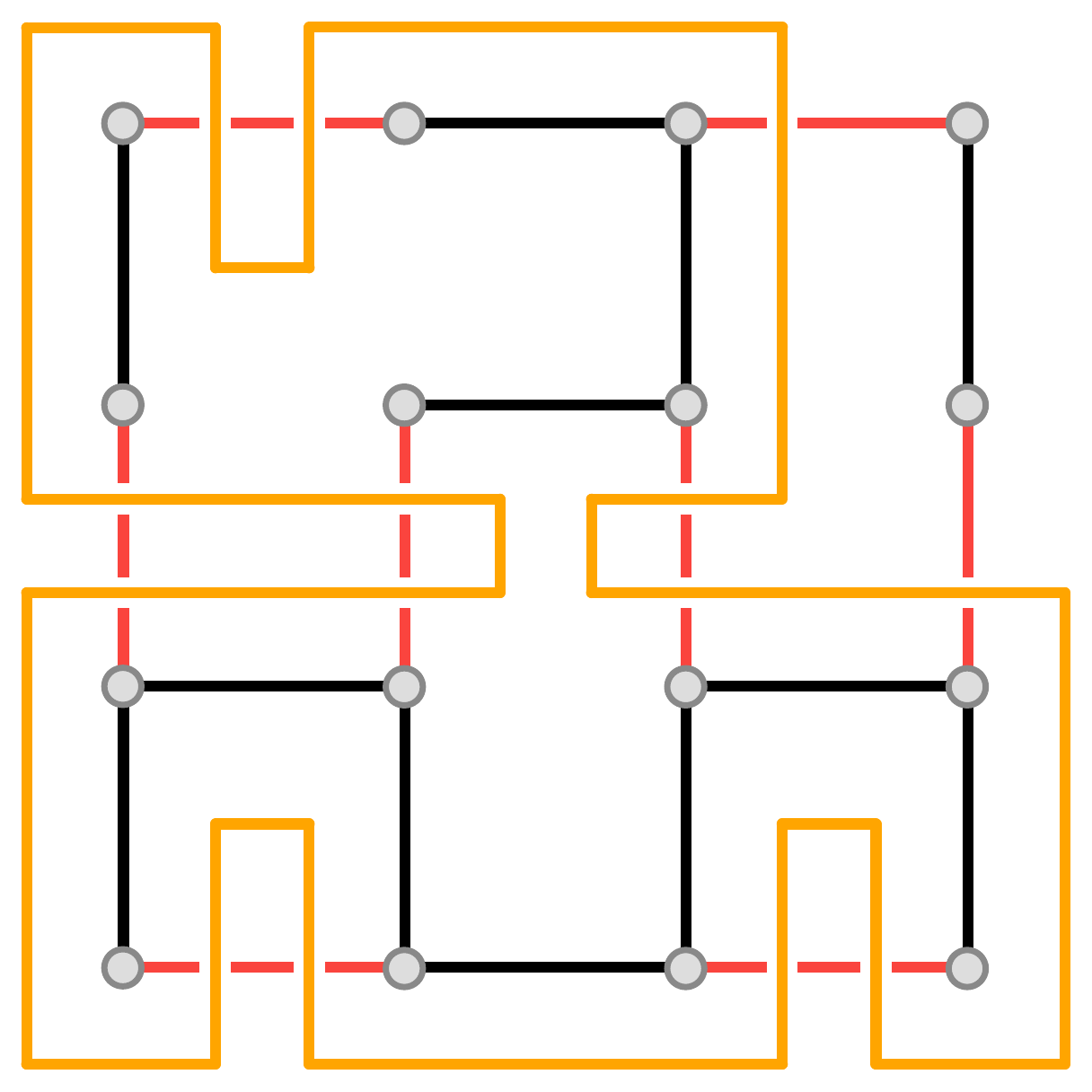}}\hfil\hfil
  \subcaptionbox{A Hamiltonian cycle in the graph covering the matched edges.}{\includegraphics[width=.3\textwidth]{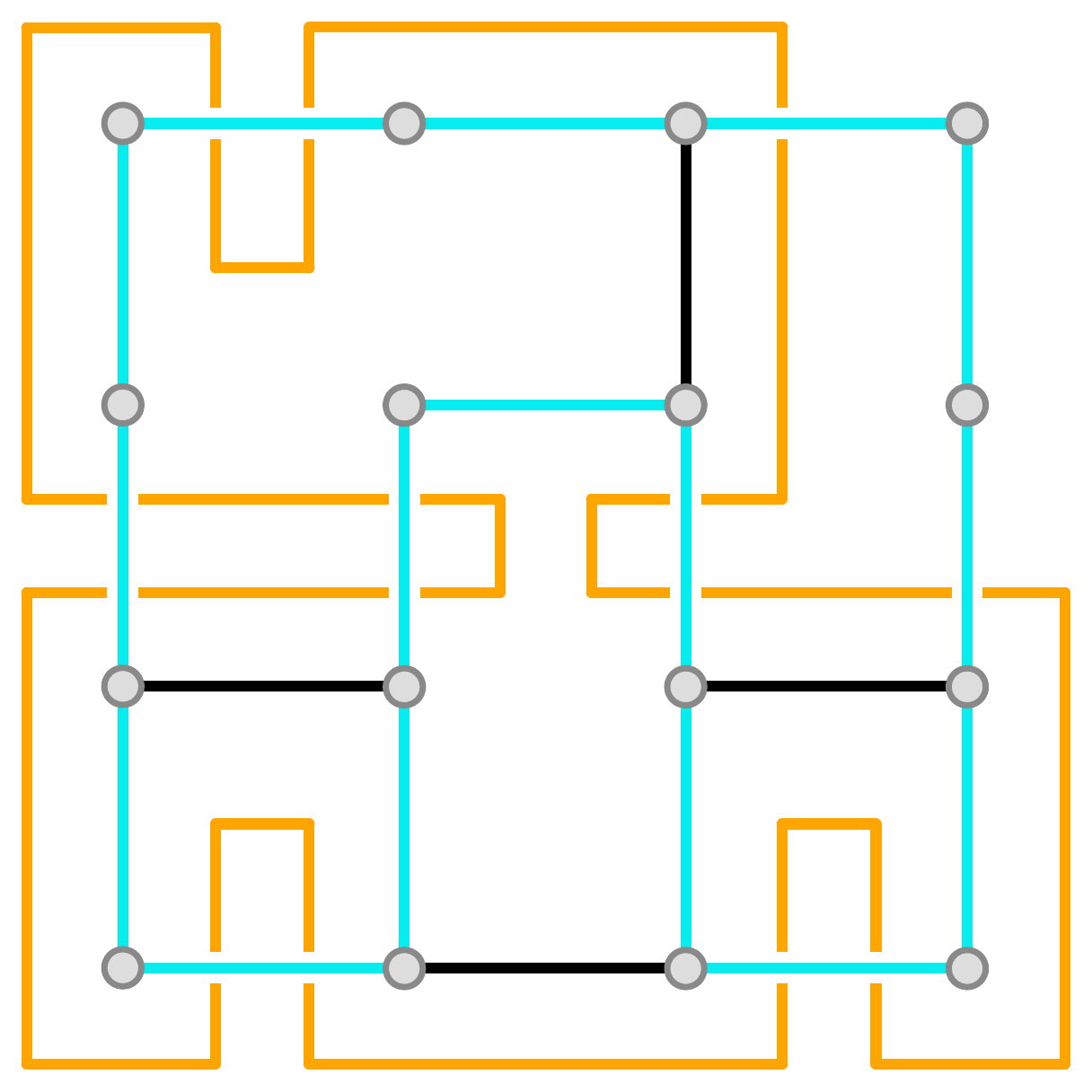}}
  \caption{An example instance of \feh\ (a) satisfying the promise (b),
    and its solution (c).}
  \label{fig:example-graph}
\end{figure}

This particular problem has not yet been shown to be NP-hard, but it follows from
an existing proof (another reduction from another known NP-hard problem):

\begin{theorem} \label{feh NP-c}
  \feh\ is \NP-complete.
\end{theorem}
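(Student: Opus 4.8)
The plan is to prove the two standard halves of \NP-completeness separately: membership in \NP\ and \NP-hardness. Membership is the easier half. A solution certificate is simply the desired cycle, written as a cyclic sequence of vertices (equivalently, edges). Given such a certificate one checks in polynomial time that consecutive vertices are joined by edges of the graph, that no vertex repeats (so that it is a genuine simple cycle), and that every matched edge occurs; since the certificate has size linear in the graph, \feh\ lies in \NP. It is worth recording one structural consequence that will also guide the hardness argument: because the matching is \emph{perfect}, every vertex is an endpoint of some matched edge, so a simple cycle that contains all matched edges must pass through every vertex. Hence a YES-instance is precisely a graph admitting a Hamiltonian cycle that contains the prescribed perfect matching, and \feh\ is exactly a forced-edge (constrained) form of Hamiltonicity.

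For hardness I would reduce from Hamiltonicity in max-degree-$3$ planar graphs, which is classically \NP-complete. The only difference between that problem and \feh\ is the presence of the forced matching, so the core of the reduction is a way to turn ordinary edges into matched edges while simultaneously (i)~preserving planarity, (ii)~keeping the maximum degree at~$3$, (iii)~producing a genuine perfect matching, and (iv)~respecting the promise. The basic tool for forcing an edge is a degree-$2$ vertex: any cycle that covers a degree-$2$ vertex must use both of its incident edges, so those edges are forced into every solution. The clean target is therefore to transform the input graph $G$ into a graph $G'$ equipped with a perfect matching $M$ whose edges are all forced, because then ``Hamiltonian cycle through all of $M$'' coincides with ``Hamiltonian cycle of $G'$,'' which I would arrange to coincide with Hamiltonicity of $G$. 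Concretely, I would route each original edge through a short gadget of degree-$2$ vertices and replace each original degree-$3$ vertex with a small planar junction gadget that supplies the remaining matched edges, so that any Hamiltonian cycle of $G'$ using all of $M$ restricts at each junction to a choice of two of its three ports, faithfully reproducing the behavior of a degree-$3$ vertex.

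The main obstacle is designing the junction gadget so that all four requirements hold at once. A degree-$2$ vertex forces \emph{both} of its edges, so one must introduce degree-$2$ vertices carefully: naively subdividing every edge forces all three edges at a degree-$3$ junction, which is impossible for a cycle and destroys Hamiltonicity. The gadget must thus contain enough degree-$2$ vertices to force exactly its own matched edges, while still allowing the solution cycle to enter and leave through \emph{any} two of the three ports (so that no original edge is spuriously forced), and it must do all of this in the plane without raising any degree above~$3$. This delicacy is exactly why the statement is phrased as following from an existing construction. Rather than engineering such gadgets from scratch, the cleanest route --- and the one the phrasing points to --- is to take the graph produced by a known \NP-hardness reduction to planar max-degree-$3$ Hamiltonicity, identify within its gadgets a set of edges that every Hamiltonian cycle is forced to traverse, and verify that these forced edges form a perfect matching.

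It then remains to check the promise and close the correspondence. For the promise I would verify that the nonmatched edges bound faces of the planar drawing --- i.e., that no cycle of nonmatched edges encloses a vertex --- so that a closed curve can be drawn crossing every matched edge (possibly several times) without crossing any nonmatched edge. The forward direction (a Hamiltonian cycle of $G$ lifts to a cycle of $G'$ through all matched edges) and the backward direction (such a cycle of $G'$ projects to a Hamiltonian cycle of $G$) are then routine given the gadget behavior. Combining this reduction with membership in \NP\ establishes Theorem~\ref{feh NP-c}.
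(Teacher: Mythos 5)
Your high-level strategy coincides with the paper's: note that a simple cycle through a perfect matching is automatically Hamiltonian, and obtain hardness not by building gadgets from scratch but by finding an existing Hamiltonicity construction whose forced edges happen to form a perfect matching and happen to satisfy the promise. The problem is that your write-up stops exactly where the work happens: no such construction is ever exhibited. Your concrete attempt uses degree-2 vertices as the forcing mechanism, and you correctly observe that this leaves the original degree-3 vertices unsaturated (subdivision forces no edge incident to a degree-3 vertex), so everything hinges on a junction gadget that simultaneously (i)~contains a forced edge saturating each of its vertices including the original one, (ii)~still allows the solution cycle to enter and leave through any two of its three ports, (iii)~is planar with maximum degree~3, and (iv)~has all its unmatched edges bounding faces so that the promise holds. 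You acknowledge this gadget is delicate and then defer to ``inspect a known reduction,'' but you never name one or verify any of (i)--(iv) for it. That deferral is the entire content of the theorem.

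The paper discharges this step by citing Plesn\'ik's reduction for Hamiltonicity in planar \emph{directed} graphs in which every vertex has either in-degree~2 and out-degree~1 or in-degree~1 and out-degree~2. There the forcing mechanism is different from yours: the unique out-edge of a $(2,1)$-vertex and the unique in-edge of a $(1,2)$-vertex must lie on any Hamiltonian cycle, so each vertex has exactly one forced incident edge; because the two vertex types form a bipartition of Plesn\'ik's graphs, these forced edges automatically constitute a perfect matching, and the promise is then verified by inspecting Plesn\'ik's gadgets. Note that the promise is genuinely construction-dependent --- the paper exhibits a graph-and-matching pair violating it (Figure~\ref{fig:promise_violated_matched_graph}) --- so ``I would verify the promise'' cannot be waved through in the abstract. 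Your membership-in-\NP\ argument and the observation that a perfect matching upgrades ``cycle through all matched edges'' to ``Hamiltonian cycle'' are both correct, but as it stands the hardness half is a plan for a proof rather than a proof.
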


\begin{proof}
This result follows from Plesn\'ik's proof \cite{PLESNIK1979} of NP-hardness
of Hamiltonicity in directed graphs where every vertex either (1)~has two
incoming edges and one outgoing edge or (2)~has one incoming edge and two
outgoing edges.
Any edge from a Type-1 vertex to a Type-2 vertex is forced to be in the
Hamiltonian cycle.
By inspecting Plesn\'ik's reduction, no two vertices of
the same type are adjacent (the two types form a bipartition).
Thus the forced edges form a perfect matching in the graph.
By further inspection, we can verify that the reduction always produces
graphs that satisfy the promise.
(The promise is not true in general; see
Figure~\ref{fig:promise_violated_matched_graph} for a counterexample.)
\end{proof}

\begin{figure}
  \centering
  \includegraphics[scale=0.4]{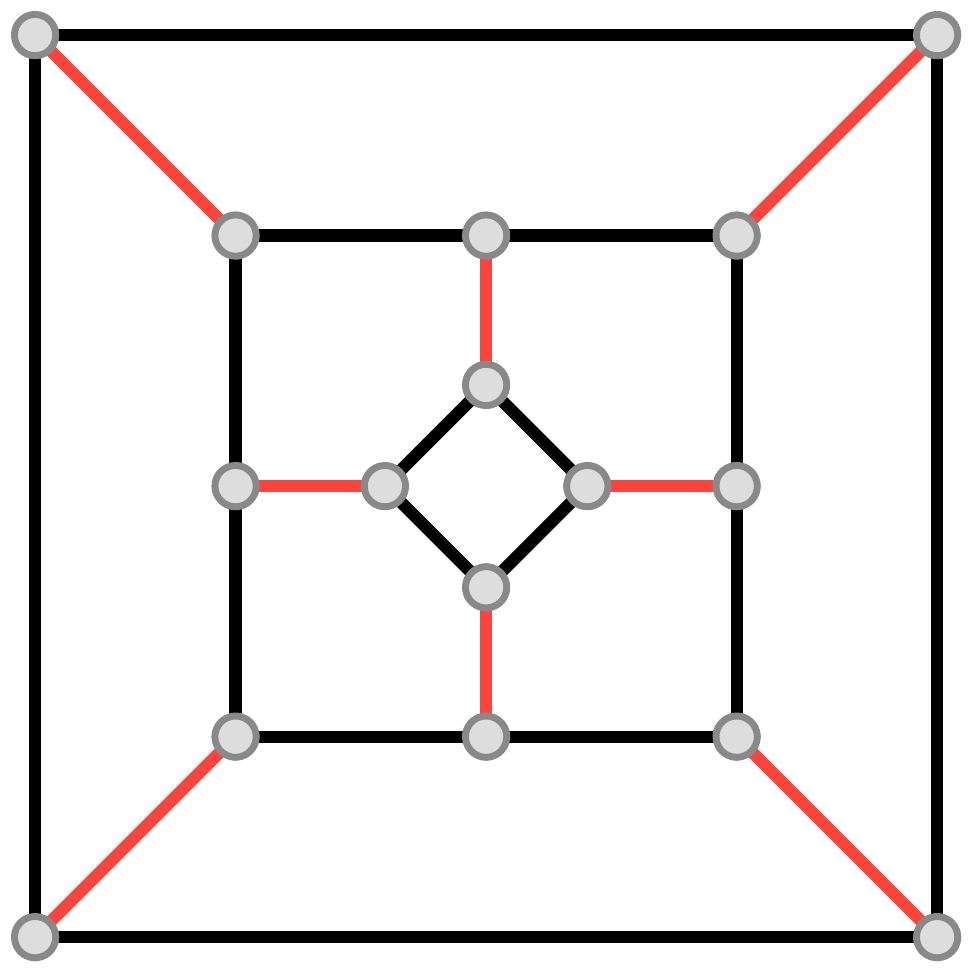}
  \caption{A graph and matching not satisfying the promise of \feh:
    the middle square of unmatched edges forms a separator that prevents any
    cycle from visiting both the diagonal and orthogonal matched edges.}
  \label{fig:promise_violated_matched_graph}
\end{figure}

Now we give a reduction from \feh\ to \clio:

\begin{theorem} \label{clio NP-hard}
  \clio\ is \NP-hard.
\end{theorem}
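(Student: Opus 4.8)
The plan is to reduce from \feh. Given a max-degree-3 planar graph $G$ with perfect matching $M$ satisfying the promise, I would build a checkers position in which Red owns a single king --- the ``runner'' --- and every other piece is a Black man, arranged so that the runner's forced jumping tour traces a cycle in $G$, and the tour captures \emph{every} Black man exactly when that cycle covers all of $M$ (equivalently, since $M$ is perfect and thus touches every vertex, when it is a Hamiltonian cycle containing $M$). Working on the $45^\circ$-rotated board, so the king jumps orthogonally, I would start from a grid embedding of $G$, replace each edge by a corridor of dark squares carrying Black men to be jumped, and replace each vertex by a gadget joining its (at most three) incident corridors. Black would have no available jump, so on its turn it can play a ``launching'' non-jump move that sets up the runner's first mandatory jump; the mandatory-jump and forced-concatenation rules then keep the king jumping until it lands on a square with no available jump.

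The heart of the construction is upgrading Fraenkel's Eulerian-path view of single-piece capture into Hamiltonicity. In the runner's jump graph, vertices are landing squares and edges are jumpable Black men, so capturing all of Black means traversing a trail that uses every edge. To turn ``visit every edge'' into ``visit every vertex once,'' each vertex gadget would be built with two properties: a \emph{transition constraint}, so that a king entering along the matched corridor is forced (by the placement of the gadget's own Black men) to exit along exactly one non-matched corridor; and \emph{single use}, so that the men consumed on the first pass make any second visit impossible. Matched edges are placed to present a mandatory jump whenever the king is adjacent, forcing every matched corridor into the tour, while non-matched corridors supply the optional links that let the tour close into a cycle. With these gadgets, a maximal forced jump sequence that sweeps up every Black man corresponds precisely to a cycle of $G$ that enters each vertex gadget once and contains all of $M$ --- a solution to \feh.

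For correctness I would argue both directions. If $G$ has a Hamiltonian cycle $C \supseteq M$, the launching move starts the king around the corridor realization of $C$, and the gadget design must ensure that \emph{every} forced continuation Red can be pushed into still collects all Black men, so Red is forced to capture everything and Black loses in one. Conversely, if no such cycle exists, I would show that any forced king tour must either try to re-enter a spent vertex gadget (impossible, its men are gone) or strand some corridor's men, so the tour halts with Black men still on the board; then Red is not forced to capture all of Black, and no launching move is a lose-in-1 move. The \feh\ promise --- that a cycle crossing only matched edges can be drawn --- would be used to keep the whole layout planar and the gadgets non-interfering, letting the matched corridors and the king's return routing embed without colliding with non-matched corridors.

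The main obstacle I expect is the vertex gadget, and specifically making the global ``forced to win'' condition coincide exactly with Hamiltonicity. It is not enough to exhibit one good tour: the problem demands that the opponent be \emph{forced} to capture everything, so I must guarantee that \emph{every} maximal jump sequence the king can be driven through is spanning precisely when a Hamiltonian cycle exists --- ruling out partial tours that dead-end with pieces remaining when a solution is present, and guaranteeing such a dead-end when none is. Engineering a single planar gadget that simultaneously enforces the matched-edge forcing, the entering-to-leaving transition restriction, and single-visit, all within the rigid orthogonal jump mechanics of checkers (kinging, mandatory jumps, and forced concatenation), is where the real work lies.
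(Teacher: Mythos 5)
Your reduction starts from the same source problem as the paper, but the architecture is inverted in a way that breaks the forward direction. You give Red a single runner, put the entire Hamiltonicity search inside Red's forced jump tour, and have Black make only a trivial ``launching'' move. But Lose-in-1 Checkers asks whether Black has a move after which Red \emph{must} win: the existential quantifier belongs to Black's move, and a universal quantifier ranges over Red's responses. If your vertex gadgets give Red genuine exit choices (which they must, or no NP-hard search is being encoded), then on a \textsc{yes} instance Red will simply steer into a non-Hamiltonian tour that dead-ends with Black men remaining, and Black's launching move does not force a win. You flag this as ``the main obstacle,'' but it is not a gadget-engineering detail --- it is a contradiction in the architecture: you would need ``some spanning tour exists'' to imply ``every maximal tour is spanning,'' which fails for essentially any graph admitting a non-Hamiltonian closed walk through the gadgets. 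A single non-jump launching move cannot transmit the Hamiltonian cycle to Red so as to constrain its choices.

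The paper resolves exactly this by splitting the work across the two plies and two parity classes. Black's motive piece (a king placed between red pieces, hence forced to start capturing) performs the long capture sequence over red pieces laid out along the graph's edges; the choices in \emph{that} sequence encode the Hamiltonian cycle, so Black's move itself is the NP certificate. Red's motive piece lives in the other parity class and must traverse a \emph{fixed}, pre-laid-out cycle of black pieces --- a cycle crossing all and only matched edges, whose existence is precisely the promise in the source problem --- with no branching choices; whether Red can complete it is gated solely by whether Black's sweep removed the red blocker pieces at the matched-edge crossing points. (Note also that the promise is used to lay out this checking cycle, not merely to keep the embedding planar as you suggest.) To salvage your design, the Hamiltonicity search has to move into Black's first move and Red's reply has to be made choice-free; at that point you are essentially rebuilding the paper's two-parity-class construction.
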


\newcommand{\green}{cyan}
\newcommand{\purple}{orange}

\begin{proof}
Our goal is to convert an instance of \feh\ into an equivalent instance of
\clio.  For example, the instance in Figure~\ref{fig:example-graph} will
convert into the checkers board in Figure~\ref{fig:overall reduction}.

\begin{figure}
\centering
\includegraphics[width=\textwidth]{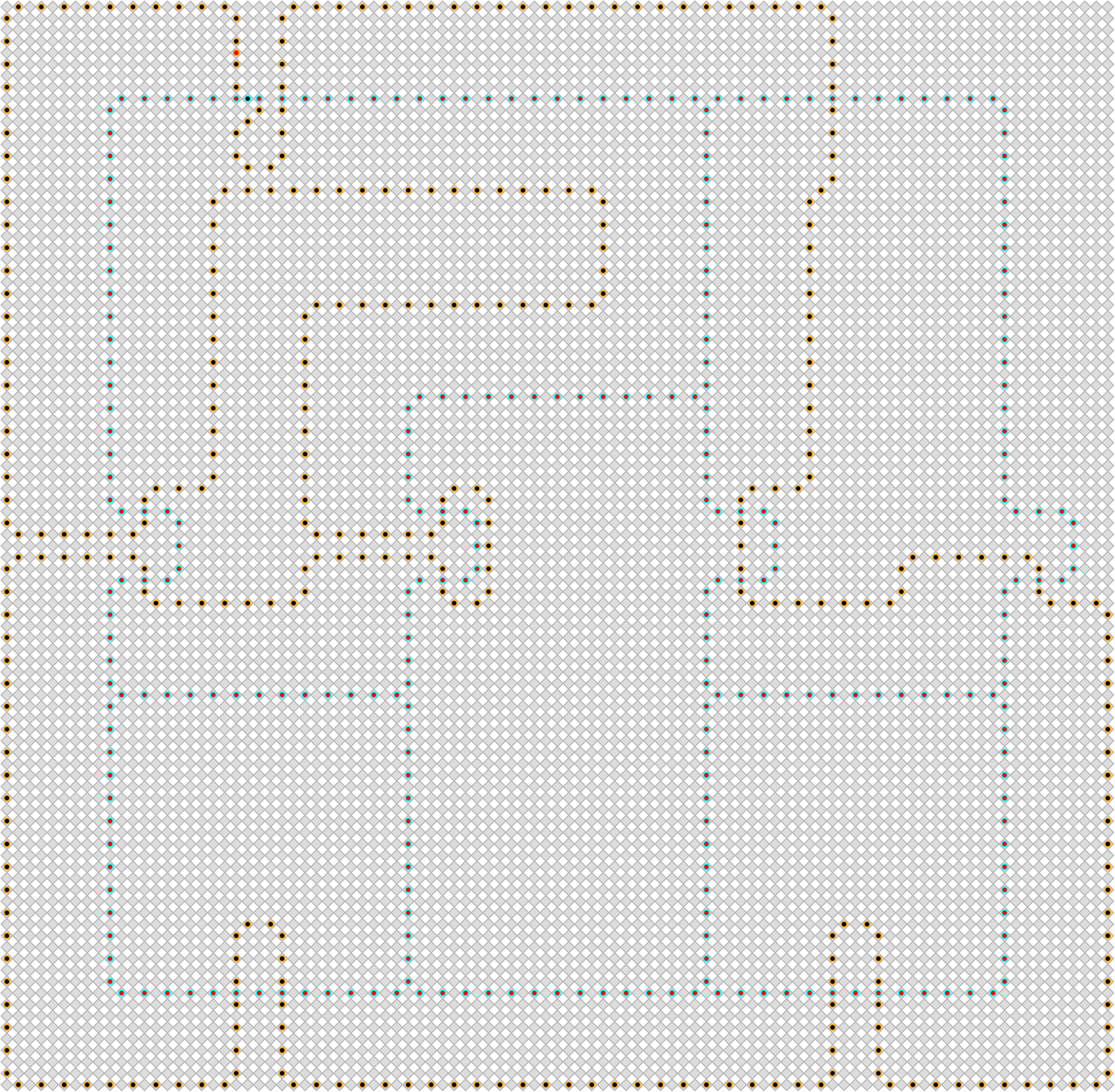}
\caption{The checkerboard produced by the reduction for the example graph and cycle in Figure~\ref{fig:example-graph}.  The motive pieces are near the upper-left horizontal edge.}
\label{fig:overall reduction}
\end{figure}

The overall structure of the reduction is to create two parity classes on the checkerboard, where each class contains one piece of one player and all other pieces of the other player.  If Black can find a Hamiltonian cycle of concatenated jumps on its turn, then Red will be forced to win by capturing all of the remaining black pieces in the second parity class; otherwise, some black pieces will remain because Red cannot traverse the entire cycle.

To produce a checkers instance from a \feh\ instance, we begin by embedding the 3-regular planar undirected graph in the planar grid.  Theorem 1 of \cite{StorerEmbedding} establishes that a graph with $n$ vertices can always be embedded in an $n\times n$ grid.  We scale the grid embedding by 27 and place red pieces at every second point along the edges.  We modify the vertical matched edge segments by moving the middle three pieces in each segment six columns to the right, then adding three red pieces in the same parity class in the rows immediately above the upper moved piece and below the lower moved piece.  We call the resulting structures \emph{tabs} (by analogy with the tabs and pockets of jigsaw puzzles).  In the figures, this grid embedding consists of the red pieces on \green{} squares.

The forced edge Hamiltonicity instance includes the promise that we can draw a cycle that crosses all matched edges and no nonmatched edges.  We construct such a cycle using the following algorithm:
\begin{enumerate}
\item Take the subgraph of the dual graph%
  \footnote{The dual graph has a (dual) vertex for each face, and a (dual) edge
    between two (dual) vertices corresponding to faces that share an edge.}
  given by the duals of the matched edges.
  By the promise, this graph is connected.
\item Until this graph is a tree, choose an edge in a cycle and
  replace one endpoint of the chosen edge with a new vertex.
\item Double the tree to form a cycle.
\end{enumerate}

Each matched edge has \emph{cycle-crossing points} at which the cycle may cross.  For the horizontal edges, the cycle-crossing points are the outer two pieces of the three center pieces in the edge; for the tabs, they are the middle piece of both horizontal sections of each tab.  We lay out this cycle in the checkerboard starting from an edge that is crossed twice consecutively (there is at least one such edge, corresponding to a leaf in the doubled tree).  At each such immediate recrossing, after entering at a cycle-crossing point, the cycle steps into the face for a limited distance, then leaves via the other cycle-crossing point (see Figure~\ref{fig:recrossing}).  At other crossings, the cycle walks the boundary of the face starting with the adjacent edge nearest the cycle-crossing point, inset by 7 spaces from horizontal edges and 8 from vertical edges (being off-by-one is necessary for the parity to work out).  This inset ensures the cycle can enter and leave at immediate recrossings without interfering with the boundary walk.  The walk crosses subsequent edges using the first cycle-crossing point encountered (and possibly resumes in the same direction after reentering via the other cycle-crossing point on that edge).  The cycle walks the outer face in the same way as any other face.  We place black pieces at every second point on the cycle, in the opposite parity class from the red empty pieces (so they are adjacent, not coincident, at edges).  In the figures, these are the black pieces on \purple{} squares.

\def\scale{1.5}

\begin{figure}
  \centering
  \subcaptionbox{Horizontal}{\includegraphics[scale=\scale]{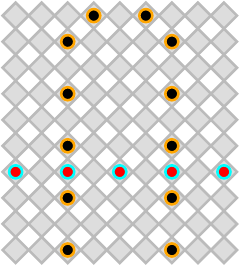}}\hfil\hfil
  \subcaptionbox{Tab, from left}{\includegraphics[scale=\scale]{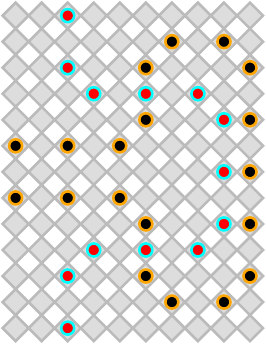}}\hfil\hfil
  \subcaptionbox{Tab, from right}{\includegraphics[scale=\scale]{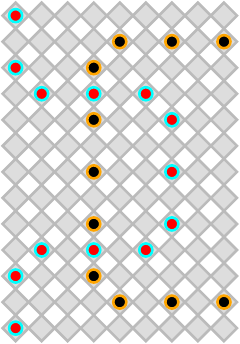}}
  \caption{Behavior of the cycle when immediately recrossing edges.}
  \label{fig:recrossing}
\end{figure}

We now place the \emph{motive} pieces, one per player, which will be the pieces moving during their turns.
Only the motive pieces are positioned to make captures, so because captures must be made if available, only the motive pieces will move during the next two plies.  We place Black's motive piece between two red pieces in the \green{} parity class along a forced edge, where one of the adjacent red pieces is the point of a cycle crossing. The cycle crossing is moved one unit to now cross at the black motive piece; we can always do this without interfering with the rest of the instance because the crossings were placed at least three spaces apart in the previous step.  (See Figure~\ref{fig:motive}.)  Red's motive piece is then placed between two black pieces in the \purple{} parity class.  Both motive pieces are kings, allowing them to move in any direction while traversing the graph or cycle.

\begin{figure}
  \centering
  \subcaptionbox{Horizontal}[1in]{\includegraphics[scale=\scale]{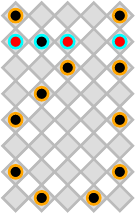}}\hfil\hfil
  \subcaptionbox{Tab}{\includegraphics[scale=\scale]{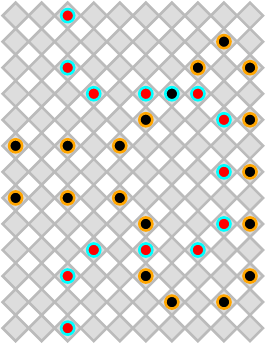}}
  \caption{Placing the black motive piece.}
  \label{fig:motive}
\end{figure}

If the input forced edge Hamiltonicity instance contains a Hamiltonian cycle including all the forced edges, then Black can capture with the motive piece all red pieces along all forced edges in the \green{} parity class, with the motive piece returning to its initial position.  Red is then forced to capture the entire \purple{} cycle, which contains all black pieces (including the black motive piece), and win the game.  Thus the lose-in-1 instance has an answer of \textsc{yes}.

If the input forced edge Hamiltonicity instance does \emph{not} contain a Hamiltonian cycle including all the forced edges, Black will not be able to capture any red pieces on at least one of the forced edges in the \green{} parity class.  Then Red's capture sequence will end because the red piece at a cycle crossing of that forced edge will block its motive piece from continuing to capture, so Red will not win the game with this move.  Thus the lose-in-1 instance has an answer of \textsc{no}.
\end{proof}

The second step is to show membership in NP.  The goal here is to show that,
when the answer to the problem is \textsc{yes}, then there is a succinct
certificate (essentially, a solution) that easily proves that the answer
is \textsc{yes}.

\begin{theorem} \label{in NP}
\clio\ is in \NP.
\end{theorem}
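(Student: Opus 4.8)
The plan is to use Black's move itself as the \NP\ certificate and to show that it can be verified in polynomial time. First I would bound the certificate size: a legal turn for Black is either a single non-capturing step or a maximal sequence of concatenated jumps, and since each jump permanently removes one Red piece, a jump sequence has length at most the number of Red pieces. Thus the certificate is polynomial in the board size, and checking that it obeys the movement, mandatory-jump, concatenation, and kinging rules is routine and polynomial. Applying the move yields a position with Red to move in which Black still has at least one piece (Black's move removes only Red pieces), so all that remains is to verify in polynomial time that Red is \emph{forced to win}: that every legal Red turn captures \emph{all} of Black's remaining pieces by a jump sequence.

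The core of the argument is this polynomial-time verification, which extends the mate-in-1 algorithm of \cite{fraenkel1978complexity} from ``some Red turn wins'' to ``every Red turn wins.'' If Red has no available jump, then jumps are not mandatory, Red's only moves are non-capturing steps, and Black's pieces survive; hence this move does not force a win and the certificate is rejected (this also covers the degenerate cases where Red has no piece or no legal move at all). Otherwise jumps are mandatory, and since exactly one piece moves per turn, I would treat each Red piece $p$ that can initiate a jump separately: build its jump graph $G_p$ whose vertices are the squares $p$ can land on and whose edges are the Black pieces $p$ can jump (respecting man-versus-king movement and the rule that a man kinging mid-sequence ends the turn), exactly as in the mate-in-1 construction. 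A Red turn with $p$ then corresponds to a trail in $G_p$ from $p$'s starting square, Red's forced continuation corresponds to extending the trail while any edge remains available, and capturing all of Black corresponds to the trail being \emph{Eulerian} with $G_p$'s edge set equal to the set of all Black pieces.

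The key strengthening, and the step I expect to be the main obstacle, is that we must certify that \emph{every} maximal (non-extendable) trail from $p$'s square is Eulerian, not merely that one Eulerian trail exists. This is strictly stronger: two triangles sharing a single vertex admit an Eulerian circuit, yet a trail starting in one triangle can return to the shared vertex and stall before entering the other. The property we need is that $G_p$ is \emph{randomly Eulerian} from $p$'s starting square, which is characterized combinatorially --- for the even-degree (circuit) case by the condition that $G_p$ minus the starting vertex is a forest, with an analogous degree-and-acyclicity condition in the two-odd-vertex (trail) case --- and hence is testable in polynomial time. Red is forced to win overall iff this holds for \emph{every} startable piece $p$ (each with $G_p$ covering all Black pieces); if some startable piece admits a maximal jump sequence leaving a Black piece uncaptured, Red can choose that piece and survive, so the certificate is rejected. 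Running this test after the guessed Black move yields a polynomial-time verifier, establishing that \clio\ is in \NP.
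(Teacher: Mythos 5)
Your proposal is correct and takes essentially the same approach as the paper: both use Black's capture sequence as the certificate and reduce verification to checking, for each red piece that can capture, that its mate-in-1 jump graph covers all black pieces and is \emph{arbitrarily traversable} (randomly Eulerian) from the start, characterized by the graph minus the start vertex being acyclic. The only cosmetic difference is that the paper handles the two-odd-degree case by attaching an auxiliary vertex to both odd vertices and reusing the even-degree test, whereas you invoke the analogous characterization directly.
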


\begin{proof}
The certificate is Black's move (capture sequence) --- whose length
is less than the number of pieces --- that forces Red to win.
We give an efficient (polynomial-time) algorithm to verify such a certificate.
First we execute the capture sequence to recover the board state at the beginning of Red's turn.
Now we need to decide whether Red has the option to \emph{not} win in one move,
in which case the certificate is invalid.
If Red cannot capture, then this is the case: Red either cannot move at all
(and thus loses) or any move does not immediately win.
Thus assume that Red can capture (otherwise return \textsc{invalid}).
For each red piece $r$ that can capture, we will check whether moving $r$
can lead to Red ``getting stuck'', leaving some black pieces uncaptured.

Construct the graph for solving mate-in-1 from \cite{fraenkel1978complexity}
as described in Section~\ref{mate-in-1}, with vertices representing
reachable positions for $r$ by jumping and edges representing (eventually)
jumpable black pieces.  For the certificate to be valid, there must be
at least one winning move using $r$, so the graph must include all black pieces
as edges, and it must have an Eulerian path starting at~$r$
(otherwise return \textsc{invalid}).
Thus the graph has zero or two vertices of odd degree, and if there are two,
one of them must be~$r$.

If all vertices of the graph have even degree, then check whether the graph
minus the starting vertex for $r$ has any cycles.  If it does have cycles,
then Red can avoid that cycle and cover the rest of the graph, and thereby get
stuck at the start vertex.  Conversely, if Red can get stuck, then there must
be a cycle in the remainder, because the graph is Eulerian.
Therefore we can detect validity in this case.

If the graph has two vertices $r,s$ of odd degree, then connect them both to a
new vertex~$r'$, and apply the algorithm for the previous case
starting from~$r'$.  Cycles in this graph including $r'$ are equivalent to
paths in the old graph starting at $r$ and ending at~$s$.  Therefore we can
again detect validity.
\end{proof}

\section{Always-Jumping Checkers is PSPACE-Complete} \label{always-jumping}

In this section, we prove the following computational decision problem
PSPACE-complete:

\newcommand{\ajc}{\PROBLEM{Always-Jumping Checkers}}

\begin{problem}[\ajc]
Given a checkers game configuration in which all moves must capture a piece, does the first player have a winning strategy?
\end{problem}

\begin{theorem}
\ajc\ is \PSPACE-complete.
\end{theorem}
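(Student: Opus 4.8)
The plan is to prove the two standard halves of \PSPACE-completeness separately: membership in \PSPACE\ (the upper bound) and \PSPACE-hardness (the lower bound). The upper bound is the easy direction, and it exploits the defining feature of the always-jumping variant. Because every move must capture at least one piece, and there are at most $O(n^2)$ pieces on an $n \times n$ board, every play of the game terminates after at most $O(n^2)$ turns; in particular the game has no draws, since whoever cannot jump on their turn simply loses. A game whose length is bounded by a polynomial can be solved by the obvious recursive minimax search: at each node we enumerate the legal (concatenated) jump sequences, recurse into each, and undo. The recursion depth is polynomial, and each stack frame needs only the current board and the move currently under consideration, each of size $O(n^2)$, so the whole search runs in polynomial space. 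This establishes that \ajc\ is in \PSPACE. (This is exactly the point where always-jumping checkers is easier than ordinary checkers, where non-capturing moves permit exponentially long plays and force the EXPTIME bound.)

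For the lower bound I would reduce from \QSAT, which is \PSPACE-complete. The target is to build, in polynomial time, a board on which the forced nature of jumps simulates the quantifier alternation. The structure I would aim for is a chain of variable gadgets laid out in quantifier order, through which the two players' motion is funneled, so that arriving at variable $x_i$ forces a binary choice --- which of two jump directions to take --- encoding the assignment of $x_i$ to true or false. The board parity must be arranged so that existentially quantified variables are decided by the first player and universally quantified variables by the second; equivalently, whose turn it is when play reaches gadget $i$ must match the quantifier on $x_i$. After this assignment phase, a verification phase of clause gadgets lets the appropriate player attempt to thread a jump sequence corresponding to a satisfied clause, with the always-jumping rule and the mandatory concatenation of jumps ensuring that this player runs out of jumps --- and hence loses --- exactly when the chosen assignment fails to satisfy the formula.

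The main obstacle I anticipate is the one that dominates essentially every game reduction: enforcing that the only genuine choices available to each player are the intended ones (the variable settings and the clause selection), while every other step is forced, and simultaneously controlling the turn parity so that the correct player is to move at each gadget. Under the always-jumping rule this cuts both ways --- mandatory, concatenated jumps remove most spurious options automatically, but each gadget must be engineered so that its entry and exit jump parities line up consistently across the entire chain. Getting this global bookkeeping right (including corridors that connect gadgets without leaking extra jumps, and ruling out deviations that capture unintended pieces) is the delicate heart of the proof, and the individual gadget designs together with the parity accounting are the steps I would expect to consume most of the effort.
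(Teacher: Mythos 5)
Your upper bound is fine and matches the paper's: every move captures at least one piece, so the game tree has polynomial depth and a standard minimax search runs in polynomial space (the paper cites the general framework of bounded two-player games for this).

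The hardness half, however, is a plan rather than a proof, and the part you defer --- ``the individual gadget designs together with the parity accounting'' --- is essentially the entire content of the lower bound. Nothing in your writeup constructs a single gadget or argues that one works, so as it stands there is no reduction. Beyond that, a direct reduction from \QSAT\ faces a difficulty you gesture at but do not resolve: in checkers the players strictly alternate, and the always-jumping rule forces Red to make \emph{some} jump each turn, but nothing forces Red to make that jump \emph{inside the variable gadget you intend}, in quantifier order, rather than jumping elsewhere on the board or stalling. You also need a mechanism by which one player can effectively pass while the other finishes traversing a long wire or clause structure. The paper avoids reinventing all of this by reducing from Bounded 2-Player Constraint Logic instead of \QSAT: it builds \textsc{variable}, \textsc{fanout}, \textsc{choice}, \textsc{and}, and \textsc{or} gadgets (plus a parity-\textsc{shift} gadget playing the role of your ``corridors''), gives Red a supply of $F_R$ isolated free-move gadgets to simulate passing, and gives Black a chain of $F_B = 2F_R$ \textsc{shift} gadgets reachable only by activating the target edge, so that a win in the constraint-logic instance translates into Red running out of jumps. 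If you pursue your outline you will end up having to rebuild exactly this machinery (turn-order enforcement, passing, and a move-counting endgame), so the gap is not a formality --- it is the whole argument.
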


Again, the proof consists of two parts --- PSPACE-hardness and membership in
PSPACE.  In this case, however, membership in PSPACE is straightforward,
because \ajc\ falls into the class of bounded 2-player games \cite{GPC}: each
move strictly decreases the number of pieces, so the number of moves is
bounded by the initial number of pieces.  To prove PSPACE-hardness, we reduce from
a known PSPACE-hard problem called \PROBLEM{Bounded 2-Player Constraint Logic}
(B2CL) \cite{GPC}.  For this proof, we assume familiarity with~\cite{GPC}
or the corresponding lectures of~\cite{6.890}.

\begin{proof}
To reduce from B2CL, we construct \textsc{variable}, \textsc{fanout}, \textsc{choice}, \textsc{and}, and \textsc{or} gadgets in always-jumping checkers that simulate the corresponding gadgets from B2CL.
Refer to Figure~\ref{fig:B2CLGadgets}.

\paragraph{Wire.} A wire is simply a line of alternating red pieces and empty spaces.  A black piece can traverse the entire length of a wire during a single move.  As every piece is a king, a wire can turn at any empty square along its length.

\paragraph{Parity shift.} During a single turn, a piece is confined to one of four parity classes (one of the four squares in a $2 \times 2$ block of squares).  In most of our gadgets, the parity of the incoming and outgoing pieces differs.  We handle this with a \textsc{shift} gadget, as shown in Figure~\ref{fig:shift} (which can also be seen as removing one of the output wires from a \textsc{choice} gadget).  Repeated applications of the \textsc{shift} gadget (and its $90^\circ$ rotation) can move a piece to any parity class of the board, so a combination of wires, turns, and shifts can move a piece to an arbitrary dark square.

\begin{figure}[h]
\centering
  \def\scale{1.05}
  \subcaptionbox{\label{fig:shift}\textsc{shift}}{\includegraphics[scale=\scale]{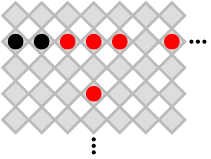}}\hfill
  \subcaptionbox{\label{fig:variable}\textsc{variable}}[0.9in]{\includegraphics[scale=\scale]{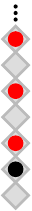}}\hfill
  \subcaptionbox{\label{fig:fanout}\textsc{fanout}}{\includegraphics[scale=\scale]{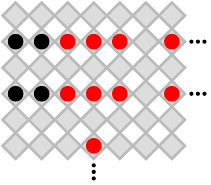}}\hfill
  \subcaptionbox{\label{fig:choice}\textsc{choice}}{\includegraphics[scale=\scale]{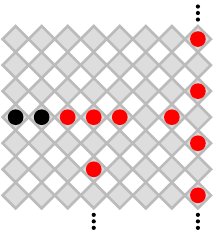}}\hfill
  \subcaptionbox{\label{fig:and}\textsc{and}}{\includegraphics[scale=\scale]{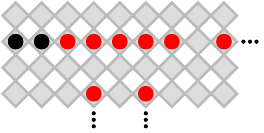}}\hfill
  \subcaptionbox{\label{fig:or}\textsc{or}}{\includegraphics[scale=\scale]{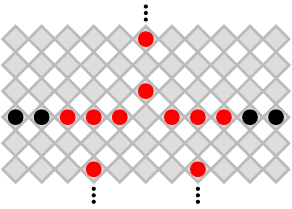}}
  \caption{Gadgets in our reduction from Bounded 2-Player Constraint Logic
    \cite{GPC} to Always-Jumping Checkers.  For example, black can get a piece to
    the top output of the \textsc{or} gadget only if they can get a piece to
    one of the inputs,
    and black can get a piece to the top output of the \textsc{and} gadget
    only if they can get a piece to both of the inputs.
    Gadgets (b)--(e) also form a reduction from Bounded Nondeterministic
    Constraint Logic \cite{GPC} to cooperative always-jumping checkers.}
  \label{fig:B2CLGadgets}
\end{figure}

\paragraph{\textsc{variable}.} When the game begins, players take turns choosing to either activate or deactivate the \textsc{variable} gadgets, shown in Figure~\ref{fig:variable}.  If Black moves first on a given variable, he must continue capturing red pieces along the wire, corresponding to setting the variable to true.  If Red moves first, he can capture the black piece, corresponding to setting the variable to false.  Once set, a variable never changes value.

\paragraph{\textsc{fanout}, \textsc{choice}, \textsc{and}, \textsc{or}.} These gadgets are all based on a similar construction.  Triples of red pieces prevent the black pieces from capturing, and pairs of black pieces prevent the red pieces from capturing.  When a black piece arrives via an input wire, it captures the middle piece of some triple(s), allowing other black pieces to leave the gadget along the output wires.  In the \textsc{fanout} gadget, the arriving black piece breaks both triples, allowing a black piece to exit along each output wire.  \textsc{choice} also has two output wires, but only one black piece can leave, so Black must choose which wire to activate.  In the \textsc{and} gadget, both black pieces must arrive to break their respective triples, while in the \textsc{or} gadget, the arrival of a black piece along either wire enables a black piece to leave.  Even if both \textsc{or} inputs are activated, the output wire is consumed when the first black piece leaves.

\paragraph{Board layout.} Our reduction translates the given B2CL graph directly using the gadgets above, adding turns and shifts as necessary for layout.
Black is the first player to move in the resulting checkers game.

In B2CL, the first player wins if they can flip a particular edge (specified
as part of the instance), which corresponds to activating an \textsc{and}
output wire in the checkers game.  That \textsc{and} output wire leads to a
long series of \textsc{shift} gadgets, giving Black a large number $F_B$ of
free moves if Black can activate the \textsc{and} output.

In B2CL, the Red player needs to be able to pass, so we also provide Red with
a large number $F_R$ of free moves using a collection of $F_R$ isolated
gadgets consisting of a pair of red pieces adjacent to a single black piece:
\includegraphics{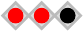}.
(Recall that the player to move loses if they cannot capture.)
We set $F_R$ to twice the number of gadgets not counting the long chain of
\textsc{shift} gadgets, which is an upper bound on the number of black moves
if the formula does not get satisfied.
We then set $F_B = 2 \, F_R$, so that Red runs out of
moves if Black can activate the desired series of \textsc{shift} gadgets.
%
Therefore Black wins the checkers game exactly when the first player wins the B2CL instance, so \ajc\ is \PSPACE-complete.
\end{proof}

\section{Cooperative Versions are NP-Complete} \label{coop}

In cooperative lose-in-1 (win-in-2) checkers, players collaborate to force one of them to win in two moves.  The computational decision problem for this game asks whether the current player has a move such that the other player can win on their next move, which is almost identical to the competitive lose-in-1 decision problem.  Indeed, the same reduction as in Section \ref{lose1} shows cooperative lose-in-1 (win-in-2) checkers is \NP-complete.

In cooperative always-jumping checkers, players collaborate to remove all the pieces of one color, and so the game becomes essentially a single-player puzzle.  Because at least one piece is still consumed on every turn, the game is bounded, so it is in NP. Furthermore, we can prove NP-hardness by reducing from \PROBLEM{Bounded Nondeterministic Constraint Logic} (Bounded NCL), which is \NP-complete \cite{GPC}.  It happens that the same \textsc{fanout}, \textsc{choice}, \textsc{and}, and \textsc{or} gadgets from Section~\ref{always-jumping} can form this reduction, which proves that cooperative always-jumping checkers is \NP-complete.

\section{Checkers Font} \label{font}

In Figure~\ref{fig:PuzzleFont}, we show a collection of cooperative
always-jumping checkers puzzles whose solutions trace letters of the alphabet
and digits 0--9, as shown in Figure~\ref{fig:SolvedFont}.
Each puzzle is a plausible end game of cooperative
always-jumping checkers.  In each puzzle, Black moves first and, in any
solution that eliminates all pieces of one color, the paths traced by
the movement of the pieces combine to create the letter or digit.
In some puzzles, Black wins (e.g., `b' and `w'), while in other games,
Red wins (e.g., `r' and~`y').

\begin{figure}
\centering
  \includegraphics[width=\linewidth]{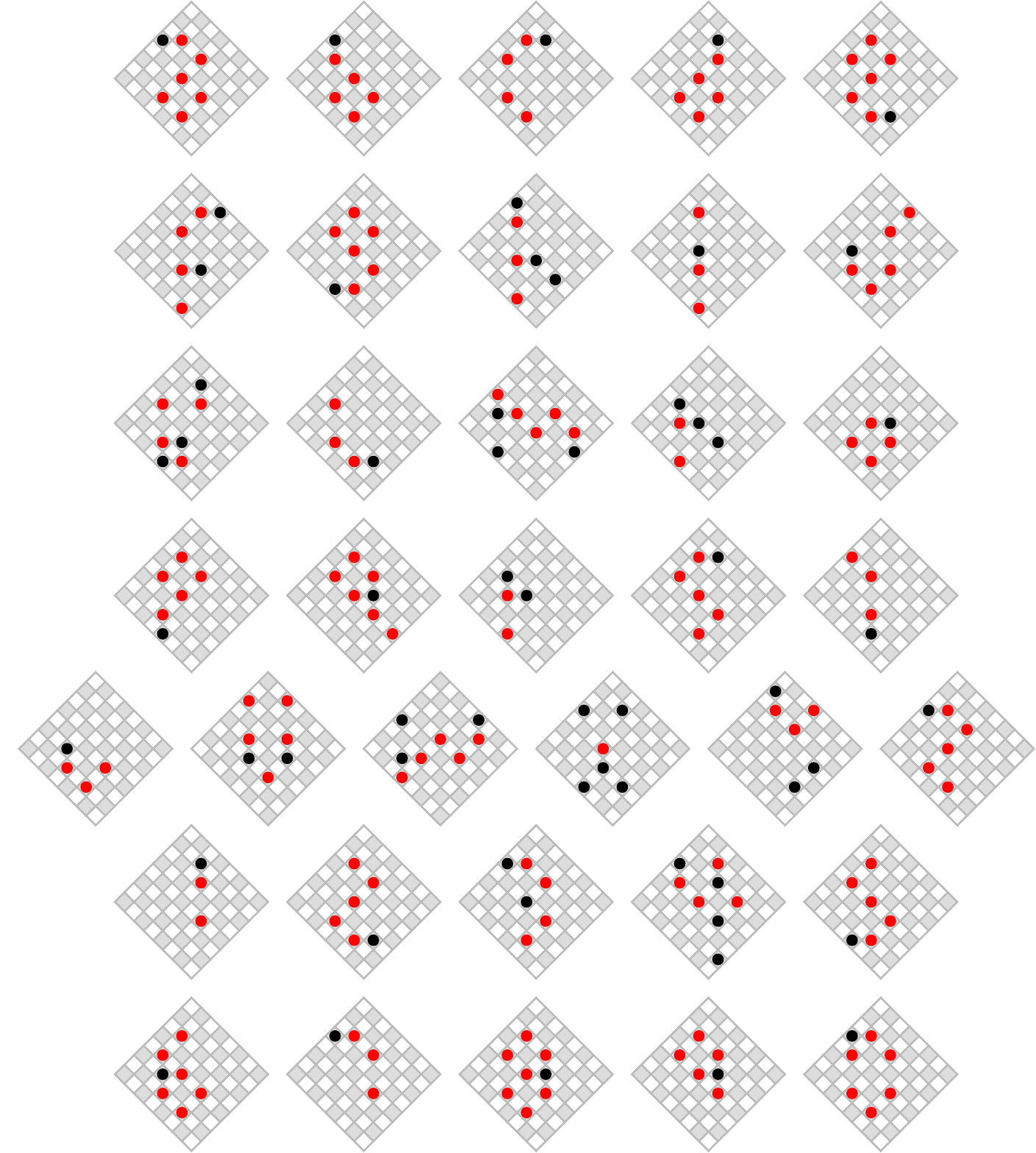}
  \caption{A puzzle font for checkers.  Each letter and digit is represented by
    an $8 \times 8$ cooperative always-jumping checkers puzzle,
    where the goal is to eliminate all pieces of one color.
    Black always moves first, but does not always win.
    The movement of the pieces in a puzzle solution
    traces the letter or digit, as revealed in Figure~\ref{fig:SolvedFont}.
    See \url{http://erikdemaine.org/fonts/checkers/} for an interactive version.}
  \label{fig:PuzzleFont}
\end{figure}

\begin{figure}
\centering
  \includegraphics[width=\linewidth]{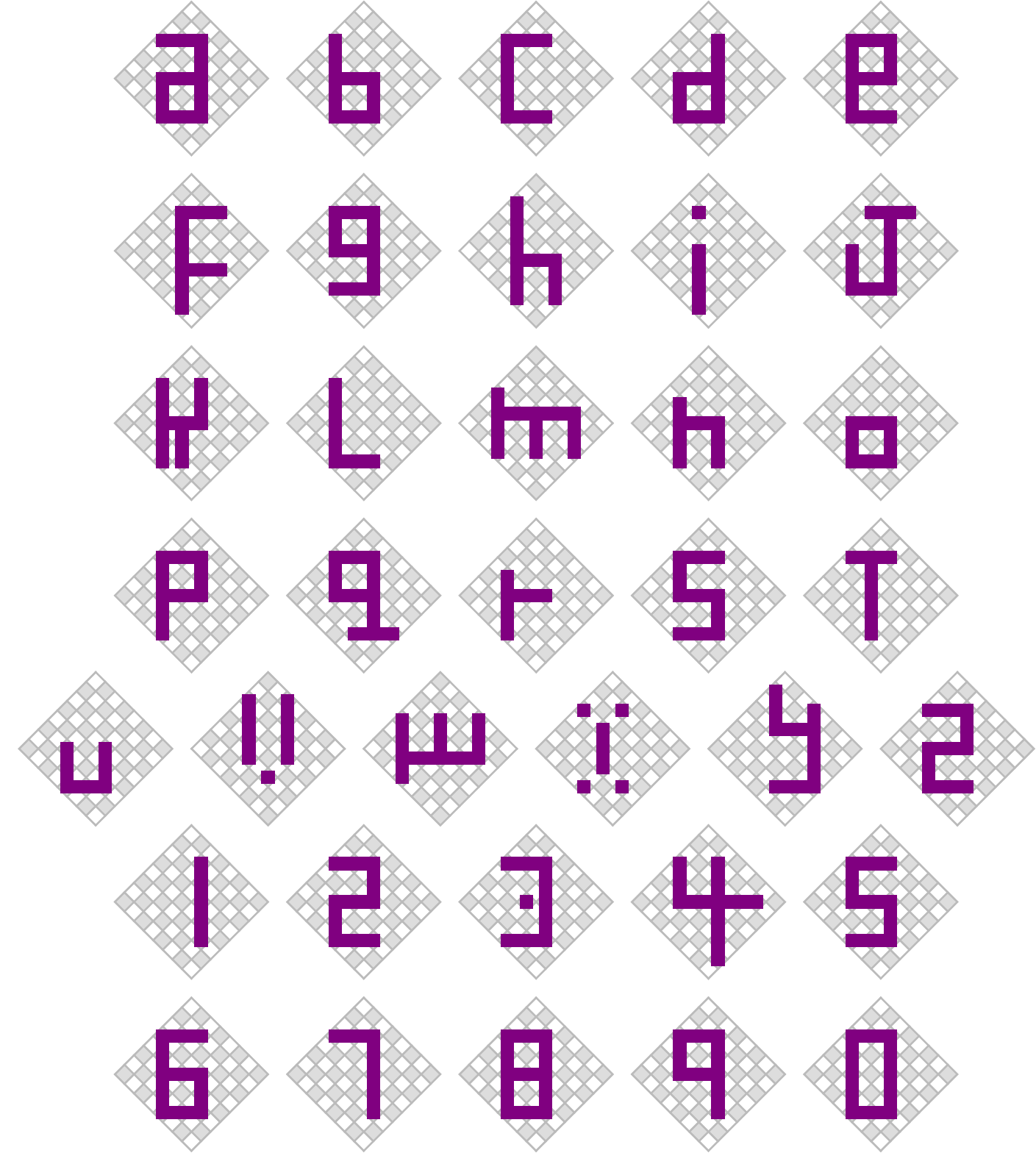}
  \caption{Solved font: Paths traced by the solutions to the $8 \times 8$ cooperative always-jumping checkers puzzles in Figure~\ref{fig:PuzzleFont}.}
  \label{fig:SolvedFont}
\end{figure}

\section{Open Problems}

The main remaining open problem is to analyze the computational complexity
of suicidal (mis\`ere) checkers.  A natural conjecture is that this game is
EXPTIME-complete, like normal checkers.

Another natural open problem is to analyze the computational complexity of
cooperative checkers, where the players together try to eliminate all pieces
of one color.  We have shown that this problem is NP-complete when limited
to just two moves or when we add the always-jumping rule.

A final open problem is the computational complexity of \emph{mate-in-2}
checkers: making a move such that, no matter what move the opponent makes,
you can win in a second move.  Given that mate-in-1 checkers is easy,
while lose-in-1 is NP-complete, we might naturally conjecture that
mate-in-2 checkers is NP-complete, but our proofs do not immediately apply.

\section*{Acknowledgments}

This work was done in part during an open problem session about hardness
proofs, which originally grew out of an MIT class (6.890, Fall 2014).
We thank the other participants of the session, in particular Mikhail Rudoy
who proved Theorem~\ref{in NP} and allowed us to include his proof.

\let\realbibitem=\bibitem
\def\bibitem{\par \vspace{-1.2ex}\realbibitem}

\bibliography{references}
\bibliographystyle{alpha}

\end{document}